\newcommand{\blind}{1}
\newtheorem{theorem}{Theorem}
\newcommand{\qed}{\nobreak \ifvmode \relax \else
      \ifdim\lastskip<1.5em \hskip-\lastskip
      \hskip1.5em plus0em minus0.5em \fi \nobreak
      \vrule height0.75em width0.5em depth0.25em\fi}
\newcommand{\abs}[1]{\left\lvert#1\right\rvert} 
\def\T{{ \mathrm{\scriptscriptstyle T} }}
\def\T{{ \mathrm{\scriptscriptstyle T} }}
\newcommand{\norm}[1]{\left\Vert#1\right\Vert}
\begin{document}

\date{}

\if1\blind
{
  \title{\LARGE\bf Bayesian Variable Selection for Skewed Heteroscedastic Response}
\author{
\small
\begin{tabular}{c}
Libo Wang$^1$,
Yuanyuan Tang$^1$,
Debajyoti Sinha$^1$,
Debdeep Pati$^1$,
and Stuart Lipsitz$^2$\\
$^1$ Department of Statistics, Florida State University, Tallahassee, FL 32306    \\
$^2$ Brigham and Women's Hospital \\
\end{tabular}
\normalsize}
  \maketitle
} \fi

\if0\blind
{
  \bigskip
  \bigskip
  \bigskip
  \begin{center}
    {\LARGE\bf Bayesian Variable Selection for Skewed Heteroscedastic Response}
\end{center}
  \medskip
} \fi


\bigskip
\begin{abstract}
In this article, we propose new Bayesian methods with proper theoretical justification for selecting and estimating a sparse regression coefficient vector for skewed heteroscedastic response. Our novel Bayesian procedures effectively estimate the median and other quantile functions, accommodate non-local prior for regression effects without compromising ease of implementation
via sampling based tools. First time for skewed and heteroscedastic response, this Bayesian method asymptotically selects the true set of predictors even when the number of covariates increases in the same order of the sample size. We also
extend our method to deal with some observations with very large errors. Via simulation studies and a re-analysis of a medical cost study with large number of potential predictors, we illustrate the ease of implementation and other practical advantages of our approach compared to existing methods for such studies.
\end{abstract}

\noindent%
{\it Keywords:}  Bayesian consistency; median regression; sparsity
\vfill

\section{Introduction} \label{sec:intro}
Large number of possible predictors and highly skewed heteroscedastic response are often major challenges for
many biomedical and econometric applications.  Selection of an optimal set of covariates and subsequent estimation of the regression function are important steps for scientific conclusions and policy decisions based on such studies. For example, previous analyses of Medical Expenditure Panel Study \citep{natarajan2008variance, cohen2003design} testify to the highly skewed and heteroscedastic nature of the main response of interest, total health care expenditure in a year. Also, it is common in such studies to have a small proportion of patients with either very high or very low medical costs.  Popular classical sparse-regression methods such as Lasso (Least absolute shrinkage operator) by \citet{tibshirani1996regression} and \citet{efron2004least}, and later related methods of \citet{fan2001SCAD}, \citet{Zou05regularizationand}, \citet{zou2006adap} and MCP \citep{zhang2010nearly} assume Gaussian (or, at least symmetric) response density with common variance. Limited recent literature on consistent variable selection for non-Gaussian response includes \citet{Zhao:2006:MSC:1248547.1248637} under common variance assumption, \citet{bach2008} under weak conditions on covariate structures, and \citet{Chen2014125} under skew-t errors. However, none of these methods deal with estimation of quantile function for heteroscedastic response frequently encountered in complex biomedical  studies. Many authors including \citet{koenker2005quantile} argue effectively against focusing on mean regression for skewed heteroscedastic response. Our simulation studies demonstrate that directly modeling  skewness and heteroscedasticity, particularly in presence of analogous empirical evidence, leads to better estimators of quantile functions for finite samples compared to existing methods which ignore skewness and heteroscedasticity.

Bayesian methods for variable selection have some important practical advantages including incorporation of prior information about sparsity, evaluation of uncertainty about the final model, interval estimate for any coefficient of interest, and evaluation of the relative importance of different coefficients. Asymptotic properties of Bayesian variable selection methods when the number of potential predictors, $p$, increases as a function of sample size $n$ have received lot of attention recently in the literature. Traditionally, to select the important variables out of $(X_1, \ldots, X_p)$,  a two component mixture prior, also referred to as ``spike and slab'' prior,  \citep{mitchell1988bayesian,george1993variable,george1997approaches} is placed on the coefficients $\beta=(\beta_1,\ldots,\beta_p)$. These mixture priors include a discrete mass, called a ``spike'', at zero to characterize the prior probability of a coefficient being exactly zero (that is, not including the corresponding predictor in the model) and a continuous density called a ``slab'', usually centered at null-value zero, representing the prior opinion when the coefficient is non-zero. Following \citet{johnson2010use,johnson2012nonlocal}, when the continuous density of the slab
part of a spike and slab prior has value 0 at null-value 0, we will
call it a non-local mixture prior. Continuous analogues of local mixture priors are being proposed recently by \citet{park2008bayesian,carvalho2010horseshoe,bhattacharya2014dirichlet} among others.  \cite{bondell2012consistent} presented the  selection consistency of joint Bayesian credible sets. However, current Bayesian variable selection methods usually focus on mean regression function for models with symmetric error density and common variance.

\cite{johnson2012nonlocal} recently showed a startling selection inconsistency phenomenon for using several commonly used mixture priors, including local mixture (spike and slab prior
with non-zero value at null-value 0 of the slab density) priors, when $p$ is larger than the order of $\sqrt{n}$. To address this for mean regression with sparse $\beta$, they advocated the use of non-local mixture density presenting continuous ``slab"  density with value 0 at null-value 0 because these priors, called non-local mixture priors here, obtain selection consistency when the dimension $p$ is $O(n)$. \cite{2014arXiv1403.0735C} provided several conditions  to ensure selection consistency even when $p \gg n$. However, none of these Bayesian methods specifically deal with skewed and heteroscedastic response, contamination of few observations with
 large errors and variable selection for median and other quantile functionals.

In this article, we accommodate skewed and heteroscedastic response distribution using transform-both-sides model \citep{lin2012semiparametric} with sparsity inducing prior for the vector of regression coefficients.  Our key observation is that, under such models with generalized Box-Cox transformation \citep{bickel1981analysis}, even a local mixture prior on after-transform regression coefficients induces non-local priors on the original regression function for certain choices of the transformation parameter. Similar to moment and inverse moment non-local priors in \citet{johnson2012nonlocal}, this method enables clear demarcation between the signal and the noise coefficients in the posterior leading to consistent posterior selection even when $p = O(n)$. Addition to that, our use of standard local priors on the transformed regression coefficients facilitates  straightforward posterior computation which can be implemented in publicly available softwares.  We later extend this model to accommodate 
cases when the observations are contaminated with a small number of observations with very large (or small) 
errors. Our approaches are shown to out-perform well-known competitors in simulation studies as well as for analyzing and interpreting a real-life medical cost study.

\section{Bayesian variable selection model}
\subsection{Transform-both-Sides Model}
For the skewed and heteroscedastic response $Y_i$ for $i=1,\ldots,n$, we assume the transform-both-sides model \citep{lin2012semiparametric}
\begin{equation} \label{eq:transmodel}
g_\eta(Y_i)=g_\eta(x_i^{\mathsmaller T}\beta)+e_i\ ,
\end{equation}
where $\beta=(\beta_1,\ldots,\beta_p)'$, $x_i$ is the observed $p$-dimensional covariate vector, $g_{\eta}(y)$ is the monotone power transformation \citep{bickel1981analysis}, 
\begin{equation}\label{eq:boxcox}
g_\eta(y)=\frac{y|y|^{\eta-1}-1}{\eta},
\end{equation}
with unknown parameter $\eta \in (0,2)$. This transformation in \eqref{eq:boxcox} is an extension of Box-Cox power family that has a long history and success in dealing with skewed and heteroscedastic response. We assume that under an optimal $\eta$, the transformed response $g_{\eta}(y)$ has a symmetric and unimodal distribution with mean and median $g_{\eta}(x_i^{\mathsmaller T}\beta)$. Thus $e_i$'s are independent mean 0 errors with common symmetric density function $f_e$ and variance $\sigma^2$.
The transformation $g_\eta(y)$ in \eqref{eq:boxcox} is monotone with derivative $g_\eta'(y)=|y|^{\eta-1} \geq 0$. Model \eqref{eq:transmodel} can be expressed as a linear model
\begin{equation} \label{eq:medianmodel}
Y_i\ =\ x_i^{\mathsmaller T}\beta\ +\ \epsilon_i\ ,
\end{equation}
where $\epsilon_i$ has a skewed heteroscedastic density with median 0 because $P[\epsilon_i>0]=1/2$, and approximate variance is $\sigma^2|x_i^{\mathsmaller T}\beta|^{2-2\eta}$. Hence, the median of the skewed and heteroscedastic response $Y_i$ in \eqref{eq:transmodel} is $x_i^{\mathsmaller T}\beta$.  For the time being, we consider a Gaussian $\mbox{N}(0, \sigma^2)$ density for $f_e$ in \eqref{eq:transmodel}. Later in \S \ref{sec:gen}, we consider other densities to accommodate a heavy tail for $f_e$.

For the model of \eqref{eq:transmodel}, any sparsity inducing prior for $\beta$  should depend on the transformation parameter $\eta$ since $\eta$ has a significant effect on the range and scale of $Y_i$ (approximate variance $\sigma^2|x_i^{\mathsmaller T}\beta|^{2-2\eta}$). Based on this argument, we specify an conditional mixture prior $g_{\eta}(\beta_j)$ given $\eta$ using a ``local'' $\phi(\cdot ; 0, \sigma_\beta^2)$ density
for the ``slab'' when $g_{\eta}(\beta_j)$ is non-zero with discrete prior probability $(1-\pi_0)$, where  $\phi(\cdot ; \mu, v^2)$ is the Gaussian density with mean $\mu$ and variance $v^2$. This conditional
mixture prior $g_{\eta}(\beta_j)$ given $\eta$ (a local mixture prior according to definition of \cite{johnson2012nonlocal}) results in a conditional mixture prior
\begin{eqnarray}\label{eq:betaprior}
	f_{\beta}(\beta_j \mid \eta)=\pi_0\delta_0(\beta_j)+(1-\pi_0)\phi(g_\eta(\beta_j); 0,\sigma_\beta^2)|g_\eta'(\beta_j)|
\end{eqnarray}
for $\beta_j$ given $\eta$ independently for $j=1,\cdots,p$, where $\pi_0 \in [0, 1]$ is the probability of $\beta_j$ being zero, $\delta_0(\cdot)$ is the discrete measure at 0. When $\eta >1$,   $g_\eta'(0) = 0$ and hence the resulting unconditional marginal prior for the nonzero $\beta_j$ in \eqref{eq:betaprior} turns out to be a non-local mixture prior of \cite{johnson2012nonlocal}. However, the prior of transformed $g_\eta(\beta_j)$, a mixture of discrete measure at 0 and $\phi(\cdot; 0, \sigma_{\beta}^2)$ density, is a local mixture prior. This is demonstrated via the plots of two resulting unconditional priors of $\beta_j$ when $\eta =0.5$ and $\eta=1.5$ in Figure \ref{fig:density_eta}. Our simulation study in \S \ref{sec:sim} shows that the model selection and estimation procedures for our Bayesian method perform substantially better than competing methods when $\eta>1$ (case with non-local unconditional prior for $\beta_j$) compared to, say, when $\eta=0.5 \in (0,1)$ (case with a local prior for $\beta$  given $\eta$).  Thus, heteroscedasticity and the possibly non-local property of $\pi(\beta \mid \eta)$  come as a bi-product of the transform-both-sides model of \eqref{eq:transmodel}. This implicit non-local mixture prior modeling of unconditional $\beta_j$ may be the reason for some desirable asymptotic properties of our method even when $p=O(n)$ (discussed in \S \ref{sec:cons}). However, our methods' ability to use a local prior for $g_{\eta}(\beta_j)$ significantly reduces computational complexity of the associated Markov chain Monte Carlo (MCMC) algorithms, while facilitating the desirable asymptotic property.


\begin{figure}[htp!]
\centering
\includegraphics[scale=0.25]{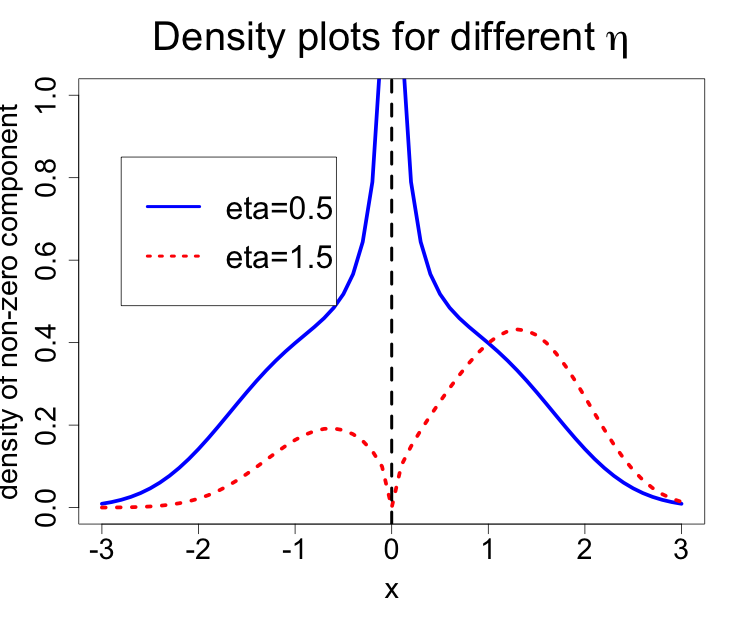}
\caption{Density Plot for Different $\eta$}
\label{fig:density_eta}
\end{figure}

When the error density $f_e$ in \eqref{eq:transmodel} is Gaussian $\phi(\cdot ; \mu, \sigma^2)$ with mean 0 and variance $\sigma^2$, we can specify a hierarchical Bayesian model using the prior $\pi(\beta \mid \eta)$ in \eqref{eq:transmodel} along with known priors for $\sigma^2$ and $\eta$,
\begin{eqnarray}  \label{eq:basicBayes}	
	\sigma^2 \sim \mbox{IGa}(a,b),	\quad
	\eta/2 \sim  \mbox{Beta}(c_1,d_1) \label{eq:prior}
\end{eqnarray}
and a hyperprior for the hyperparameters $(\pi_0,\sigma_\beta^2)$. For computational simplicity, in \S \ref{sec:cons}, we establish variable selection consistency of this hierarchical Bayesian model using \eqref{eq:transmodel} along with the non-local prior \eqref{eq:betaprior} on $\beta$ and the prior for $\sigma^2$ in \eqref{eq:basicBayes}.

\section{Consistent variable selection for large $p$}\label{sec:cons}
In this section, we investigate selection consistency of the proposed model when the number of covariates $p$ is grows with sample size $n$, with $p \leq n$ and the true error density is skewed and heteroscedastic, but follows the same specification as  in \eqref{eq:transmodel}.  Unlike the Gaussian likelihood with mixture priors of \citet{johnson2012nonlocal}, our Bayesian model described in \eqref{eq:transmodel} - \eqref{eq:basicBayes} does not admit a closed form expression of the marginal likelihood. We derive appropriate bounds  of the marginal likelihood  to obtain the desired Bayesian consistency results. Here we only present a brief outline of our assumptions, developments and practical implications of our theoretical results. Supporting results and details of the proofs are deferred to \S \ref{ssec:A1}. For brevity of exposition, we  consider a design matrix $X$ which are nearly orthogonal in the sense that there exists constant $0 < c_1 \leq c_1$ such that  $c_1 n \leq \lambda_1(X^{\mathsmaller T} X) \leq \lambda_p(X^{\mathsmaller T} X) \leq c_2 n$ where $\lambda_1(A) \leq \lambda_2(A) \leq \cdots \leq \lambda_p(A)$ denote ordered eigen values of the matrix $A$.  The assumption ensures identifiability of the regression coefficient $\beta$ and is commonly assumed in the selection consistency literature, refer for example to  \cite{johnson2012nonlocal}.  

To estimate the posterior probability assigned to the correct model, we use Laplace approximation \citep{rossell2015non} to obtain probabilistic bounds of marginal likelihood $p(y|\gamma)$. Here $\gamma =(\gamma_1, \ldots, \gamma_p)$ denotes the predicted model for which  $\gamma_j=I(\beta_j \neq 0)$ for $j= 1, 2, \cdots, p$ are the indicators of active coefficients and $p_\gamma=\sum_{i=1}^p\gamma_i$ denotes the number of active variables. Denote by $\beta^*$ the vector of true regression coefficients with $\gamma^*$ defined accordingly. Assuming $\tilde{\beta}$ as the posterior mode and $\eta$ is the optimal power parameter, the Laplace approximation of $p(y|\gamma)$ around $\tilde{\beta}$ is 
\begin{equation}\label{eq:marginlikelihood}
p(y|\gamma) \approx \frac{(2\pi)^{p_\gamma/2}}{|H(\tilde{\beta})|^{1/2}}
\exp\{\log L( \tilde{\beta})+\text{log}(\pi(\tilde{\beta}|\gamma)))\}
\end{equation}
where the projection of the prior $f_\beta$ in \eqref{eq:betaprior} onto the support of $\gamma$, denote by $\pi(\beta \mid \gamma)$  and is given as 
\begin{eqnarray}\label{eq:nonlocal}
	\pi(\beta \mid \gamma) =  \prod_{j \in \gamma}  \phi(g_\eta(\beta_j); 0,\sigma_\beta^2)\abs{g_\eta'(\beta_j)}.
\end{eqnarray}
We note again that \eqref{eq:nonlocal} becomes a non-local prior \citep{johnson2012nonlocal} for $\eta \in (1,2)$ and is the primary reason for controlling false positives.  Based on definitions of key concepts in \eqref{eq:marginlikelihood} and \eqref{eq:nonlocal}, we state our main theorem on selection consistency of our Bayesian method even when $p$ is of the order $O(n)$.
 \begin{theorem}\label{thm:1}
 When the observations are generated from \eqref{eq:transmodel} for a known $\eta \in (1, 2)$ and $\sigma > 0$, and model \eqref{eq:transmodel} is fitted with priors \eqref{eq:betaprior} and \eqref{eq:prior} for the optimal $\eta$ and with any $\pi_0 \in (0, 1)$, then for $p \leq n$,   the posterior probability $P( \gamma=\gamma^* \mid y)\to 1$ almost surely as $n, p \to+\infty$ .
\end{theorem}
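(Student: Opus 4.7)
The plan is to write
\[
P(S_\beta = S_0 \mid D_n) \;=\; \Biggl(\,1 + \sum_{S \neq S_0} \frac{\pi(S)}{\pi(S_0)} \cdot \frac{m_S(\mathbb{Y})}{m_{S_0}(\mathbb{Y})}\,\Biggr)^{-1},
\]
and show that the sum on the right tends to zero almost surely. I would split the sum into two families: the \emph{overfitted} models $\mathcal{O} = \{S : S \supsetneq S_0\}$ and the \emph{underfitted} models $\mathcal{U} = \{S : S \not\supseteq S_0\}$. Since the prior ratios $\pi(S)/\pi(S_0)$ are bounded by a power of $\pi_0/(1-\pi_0)$, the heart of the argument is to control the Bayes factors $m_S(\mathbb{Y})/m_{S_0}(\mathbb{Y})$.

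For the underfitted family I would exploit the structure of the design. Writing $\mu_i = g_\eta(x_i^{\mathsmaller T}\beta)$, the alternating first column forces the data to require a contrast of size $O(\beta_{0,1})$ between even and odd indices; any $S \in \mathcal{U}$ replaces this contrast with a single constant level and therefore incurs a sum-of-squares deficit of order $n$ inside the Gaussian likelihood. Applying the strong law of large numbers to $g_\eta(Y_i)^2$ and invoking a uniform upper-bound on the integral of $\pi(\beta_S)$ (compactified by the Gaussian slab and by integrating out $\sigma^2$ against the inverse-gamma prior), I would obtain $m_S(\mathbb{Y})/m_{S_0}(\mathbb{Y}) \le e^{-c n}$ almost surely for some $c > 0$ not depending on $S$. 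The union bound then costs at most $|\mathcal{U}| \le 2^p \le 2^{n/\log 4} = e^{n/2}$, and choosing $c > 1/2$ (which is ensured by taking the optimal $\eta$ so that the contrast signal is non-degenerate) kills the underfitted contribution.

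For the overfitted family the likelihoods are asymptotically comparable, so the decay must come from the prior. For every $S = S_0 \cup T$ with $T \neq \emptyset$, the integrand contains
\[
\prod_{j \in T} \phi\!\bigl(g_\eta(\beta_j);0,\sigma_\beta^2\bigr)\,\bigl|g_\eta'(\beta_j)\bigr|,
\]
and by Laplace expansion around the posterior mode (using that $\hat\beta_j = O_p(n^{-1/2})$ under the true model) the factor $|g_\eta'(\hat\beta_j)| = |\hat\beta_j|^{\eta-1}$ contributes $n^{-(\eta-1)/2}$, while the Gaussian curvature contributes an additional $n^{-1/2}$. Together each spurious coordinate contributes a shrinkage of order $n^{-\eta/2}$, so $m_S(\mathbb{Y})/m_{S_0}(\mathbb{Y}) \le C^{|T|} n^{-\eta|T|/2}$. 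Summing over the at most $\binom{p}{k}$ overfitted models of extra size $k$ and using $p \le n/\log 4$ yields a series dominated by $\sum_k (p/n^{\eta/2})^k \to 0$.

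The step I expect to be the main obstacle is the lack of a closed-form $m_S(\mathbb{Y})$: the Gaussian-in-$g_\eta(\beta)$ likelihood is not conjugate to the prior on $\beta$ itself, so I cannot simply integrate $\beta$ out as in Johnson and Rossell. The technical work is to show that a Laplace expansion is uniformly valid in $S$ and to control the remainder; this will use the deferred supporting lemmas in \S\,\ref{ssec:A1}--\S\,\ref{ssec:A2}. Once that uniformity is in hand, combining the two regimes via the Borel--Cantelli lemma gives $P(S_\beta = S_0 \mid D_n) \to 1$ almost surely.
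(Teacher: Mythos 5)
Your opening identity and the underfitted/overfitted split are reasonable, and your treatment of the underfitted family (an exponential likelihood deficit beating the union bound over at most $2^p \le e^{n/2}$ models, which is exactly where $p \le n/\log 4$ enters) is the same mechanism the paper relies on. The paper, however, does not use an SLLN-plus-Laplace argument anywhere: it exploits the degenerate design (alternating first column, all remaining columns equal to $\mathbf{1}_n$) to reduce every $m_S(\mathbb{Y})$ to an explicit inverse-gamma integral, upper-bounds the exponent with the elementary quadratic inequality of Lemma \ref{lem:lemma}, and compares the resulting $(\cdot)^{(n-1)/2}$ factors via Stirling's formula, treating \emph{every} $S \neq S_0$ --- supersets of $S_0$ included --- through the same exponential-in-$n$ bound. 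Note also that your requirement $c>1/2$ is a signal-strength condition; asserting it is ``ensured by taking the optimal $\eta$'' is not an argument.

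The genuine gap is in your overfitted family. First, even granting your claimed shrinkage of $n^{-\eta/2}$ per spurious coordinate, the resulting bound $\sum_k \binom{p}{k}\bigl(Cn^{-\eta/2}\bigr)^k \approx \sum_k \bigl(Cpn^{-\eta/2}\bigr)^k$ converges only if $p = o(n^{\eta/2})$; since $\eta \in (0,2)$, this never covers $p \asymp n/\log 4$, which is the regime of the theorem. Second, the rate itself does not hold: a spurious coordinate satisfies $g_\eta(\hat\beta_j) - g_\eta(0) = \mathrm{sgn}(\hat\beta_j)|\hat\beta_j|^{\eta}/\eta = O_p(n^{-1/2})$, hence $|\hat\beta_j| = O_p(n^{-1/(2\eta)})$ rather than $O_p(n^{-1/2})$. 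By parametrization-invariance of $m_S(\mathbb{Y})$ --- the prior \eqref{eq:betaprior} puts a Gaussian, hence local, density on $t_j = g_\eta(\beta_j)$, which is bounded away from zero at the null value $g_\eta(0) = -1/\eta$ --- the per-coordinate Bayes-factor shrinkage is the ordinary local-prior rate $n^{-1/2}$: the flatness of the likelihood in $\beta_j$ near $0$ exactly cancels the vanishing of the induced prior there, so the Johnson--Rossell non-local mechanism you invoke buys nothing here. The paper sidesteps this entirely: because all spurious columns are $\mathbf{1}_n$, Lemma \ref{lem:lemma} yields exponential-in-$n$, not polynomial, upper bounds on $m_S(\mathbb{Y})$ for every $S \neq S_0$, and the whole sum is controlled by the single comparison of $2^p$ against $e^{n/2}$. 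Without either that design degeneracy or a genuinely stronger per-coordinate penalty, your overfitted step does not close.
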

The detailed proof of Theorem 1, given in \S \ref{ssec:A1}, is a non-trivial extension of the proof of Theorem 1 in \cite{johnson2012nonlocal} which used non-local priors to obtain variable selection consistency. Unlike them, we use a local prior $g_{\eta}(\beta)$ to induce a possible non-local prior for $\beta$ given $\eta$ in \eqref{eq:betaprior}. To the best of our knowledge, this is the first result on  Bayesian selection consistency when the response distribution is skewed and heteroscedastic and our method of proof opens the theoretical investigation of sparse Bayesian methods for transformable models and heterodasticity.

\section{Accommodating extremely large errors}\label{sec:gen}
Presence of few observations with extremely large errors and their influences on final analysis for various application areas have been emphasized by many authors including \citet{hampel2011robust}.  The assumption of Gaussian error density $f_e$ in \eqref{eq:transmodel} may not be valid due to the presence of a small number of observations with large errors even after optimal Box-Cox transformation. To address this,  we extend the model \eqref{eq:transmodel} to a random location-shift model with
\begin{equation}\label{eq:transmodeloutlier}
g_\eta(Y_i)=g_\eta(x_i^{\mathsmaller T}\beta)+\gamma_i+e_i\ ,
\end{equation}
where $\gamma_i$ is nonzero if the $i$th observation has large error, and  zero otherwise. We assume the vector $\gamma = (\gamma_1, \ldots, \gamma_n)^{\mathsmaller T}$ to be sparse to ensure only a small probability of the response having a large error after transformation. Similar idea of location-shift model, however, with un-transformed response, is popular in the recent literature on robust linear models (for example, \citet{she2011outlier} and \citet{mccann2007robust}).  To ensure that $g_\eta(x_i^{\mathsmaller T}\beta)$ is the mean and median of $g_\eta(Y_i)$,  we require the mean and median of $\gamma_i+e_i$ to be zero, that is, we need a symmetric distribution for $\gamma_i$. For this purpose, we use another spike-and-slab mixture prior $f_{\gamma}(\gamma_i)=\pi_\gamma\delta_0+(1-\pi_\gamma)\phi(\gamma_i; 0,\sigma_\gamma^2)$ independently for $i=1, \ldots, n$, where $0<\pi_{\gamma}<1$.

To induce a heavy-tailed error density after transformation, we also consider another extension of the model \eqref{eq:transmodel} as
\begin{eqnarray}\label{eq:errorNI}
g_\eta(Y_i)=g_\eta(x_i^{\mathsmaller T}\beta)+U_i^{-1/2}e_i\qquad with \quad U_i \sim H(\cdot \mid \nu)\ ,
\end{eqnarray}
where $H(\cdot \mid \nu)$ is a positive mixing distribution indexed by a parameter $\nu$ and $e_i$'s are again independent $\mbox{N}(0,\sigma^2)$.  This class of heavy-tailed error distributions of \eqref{eq:errorNI} is called normal independent (NI) family \citep{lange1993normal}. We consider three kinds of heavy tailed distribution, Student's-t, slash and contaminated normal (CN) respectively,
for the marginal error density in \eqref{eq:errorNI} using the following specific choices of $H(\cdot \mid \nu)$ \citep{lachos2011linear}: $\chi_\nu^2/\nu$ distribution with possibly non-integer $\nu>2$, $H(u\mid \nu)=u^{\nu}$ for $u\in [0,1]$, and discrete $H(u\mid \nu)$ with $P[\rho<1]=1-P[\rho=1]=\nu$. For student-t error, we use the prior for the degrees of freedom parameter $\nu$ to be a truncated exponential on the interval $(2,\infty)$.  For $\nu$ of the slash distribution marginal error, we use a $\mbox{Gamma}(a,b)$ prior with small positive values of $a$ and $b$ with $b\ll a$. For contaminated normal marginal error, we assign $\mbox{Beta}(\nu_0,\nu_1)$ and $\mbox{Beta}(\rho_0,\rho_1)$ priors respectively for $\nu$ and $\rho$. In \S \ref{sec:sim}, we compare the performances of Bayesian analyses under these competing models for highly skewed and heteroscedastic responses.

\section{Simulation Studies}\label{sec:sim}
{\bf Simulation model with no outliers:} We use different simulation models to compare our Bayesian methods under model \eqref{eq:transmodel} with LASSO \citep{tibshirani1996regression} and the penalized quantile methods \citep{koenker2005quantile}. From each simulation model, we simulated $50$ replicated datasets of sample size $n=50$. For both simulation studies, the observations are sampled from the model \eqref{eq:transmodel} with $e_i \sim \mbox{N}(0,\sigma_0^{2})$ with $\sigma_0^{2}=1$. The hyperparameters for priors in \eqref{eq:basicBayes} are set as $a=2, b=2, c_1=1$ and $d_1=1$.
The tuning parameters for LASSO and penalized quantile regression are selected via a grid search based on the 5-fold cross-validation. We compare the estimators from different methods based on following criteria: the mean masking proportion $M$ (fraction of undetected true $\beta_j \neq 0$), the mean swamping proportion $S$ (fraction of wrongly selected $\beta_j=0$), and the joint detection rate $\mbox{JD}$ (fraction of simulations with 0 masking). We also compare the goodness-of-fit of  estimation methods using an influence measure $L/L^*-1$, where 
\begin{align} \label{eq:estimationL1}
L&=&\sum_{i=1}^n(g_{\eta_0}(y_i)-g_{\eta_0}(x_i^{\mathsmaller T}\hat{\beta}))^2/(2\sigma_0^{2})-n/2\log(2\pi\sigma_0^{2})+(\eta_0-1)\sum_{i=1}^n\log(|y_i|).
\end{align}
is the log-likelihood under $(\hat{\beta},\eta_0,\sigma_0)$ and $L^*$ is the same log-likelihood under $(\beta_0, \eta_0,\sigma_0)$, and  $(\beta_0,\eta_0,\sigma_0)$ are the known true parameter values (of the simulation model).  The results of our study using simulated data from TBS model \eqref{eq:transmodel} with different values of $\eta$ are displayed in Table \ref{table:snonoutlier} with $p=8$.

\begin{table}[htp!]
\begin{center}
\caption{Results of simulation studies for using different methods of analysis: \label{table:snonoutlier}}
\smallskip
\small{Simulation model of \eqref{eq:transmodel} with $\eta_0=0\mbox{$\cdot$}5, p=8, \beta_0=(3,1\mbox{$\cdot$}5,0,0,2,0,0,0)$.}
\scalebox{0.9}{
\begin{tabular}[ht!]{l|llllll}
	\hline 
	Method used & $L/L^*-1$& \# of non-zeros &M($\%$) &S($\%$) &\mbox{JD}($\%$)\\ \hline
	TBS-SG& -0$\cdot$02 & 3$\cdot$16 & 0 & 3$\cdot$2 & 100\\ 
	Penalized Quantile&	0$\cdot$04 & 5$\cdot$84 & 0 & 56$\cdot$8 & 100\\ 
	LASSO & 0$\cdot$04 & 4$\cdot$98 & 0 & 3$\cdot$96 & 100\\ 
	TBSt-SG&	-0$\cdot$01 & 3$\cdot$16 & 0 & 3$\cdot$2&100	\\ 
	TBSS-SG &-0$\cdot$02 & 3$\cdot$16 & 0 & 3$\cdot$2 &100		\\	
	TBSCN-SG & -0$\cdot$02 & 3$\cdot$14 & 0 & 2$\cdot$8 & 100 	\\	\hline
\end{tabular}}
\end{center}

\begin{center}
\small{Simulation model of \eqref{eq:transmodel} with $\eta_0=1\mbox{$\cdot$}8, p=8, \beta_0=(3,1\mbox{$\cdot$}5,0,0,2,0,0,0)$.}
\scalebox{0.8}{
\begin{tabular}[ht!]{l|llllll}
	\hline 
	Method used & $L/L^*-1$& \# of non-zeros &M($\%$) &S($\%$) &\mbox{JD}($\%$)\\ \hline
	TBS-SG& -0$\cdot$06 & 3$\cdot$02 &0 & 0$\cdot$4&100\\ 
	Penalized Quantile&	0$\cdot$04 & 5$\cdot$82 & 0 & 56$\cdot$4 & 100\\
	LASSO & 0$\cdot$66 & 4$\cdot$56 &0 & 3$\cdot$12&100\\ 
	TBSt-SG & -0$\cdot$05 & 3 & 0 & 0 &100		\\ 
	TBSS-SG & -0$\cdot$06 & 3 & 0 & 0 & 100		\\	
	TBSCN-SG &-0$\cdot$06 & 3 & 0 & 0 &100 	\\	\hline
\end{tabular}}
\smallskip

\small{$M$: masking proportion (fraction of undetected true $\beta_j \neq 0$); $S$: swamping proportion $S$ (fraction of wrongly selected $\beta_j$ with true value 0); $\mbox{JD}$: joint detection rate.}
\end{center}
\end{table}

In Table \ref{table:snonoutlier}, we compare our Bayesian TBS model \eqref{eq:transmodel} with prior \eqref{eq:betaprior} for $\beta$ (called TBS-SG in short) to frequentist methods of penalized quantile and LASSO. From the results in Table \ref{table:snonoutlier},  it is evident that our TBS-SG method provides better results than competing frequentist methods in terms of average number of non-zeros and swamping error rate. We also compare TBS-SG method with other Bayesian TBS models with heavy tailed normal independence (NI) error in \eqref{eq:transmodel}. These competing NI models in \eqref{eq:errorNI} include TBSt-SG model (in short) with $t$ distribution for $H(\cdot \mid \nu)$, TBSS-SG model (in short) with slash distribution for $H(\cdot \mid \nu)$ and TBSCN-SG model (in short) with contaminated normal distribution for $H(\cdot \mid \nu)$. All our Bayesian methods have ``SG'' in their end of acronym to indicate the spike Gaussian prior of \eqref{eq:betaprior} for $\beta$. TBS models accommodating heavy tailed response perform the best in competing models with ideal masking, swamping and joint outlier detection rates. Both our methods and frequentist methods provide comparable performances based on average $L/L^*-1$ values, although the $L/L^*-1$ values from penalized quantile estimates using different datasets are highly variable. All methods perform desirable with respect to masking and joint detection. Also,  we found that our Bayesian methods provide better results when true $\eta$ value is $\eta_0=1.8$, compared to $\eta_0=0.5$.

To compare the performances for different $\eta$, we set $p =20$ and the number of non-zero coefficient to be $12$.  Denote by $(x)_{k}$ the vector formed by appending $k$ copies of $x$. Consider case i) $\beta_0= \{(2)_{12}, (0)_8\}$,  case ii) $\beta_0=\{(-10)_6, (4)_6, (0)_8 \}$, case iii) $\beta_0=\{(-10)_{10}, (4)_2, (0)_8 \}$, case iv) $\beta_0= \{(-10)_2, (-4)_2, (-2)_2, (2)_2, (4)_2, (10)_2, (0)_8  \}$, case v) $\beta_0= \{(-10)_6, (2)_6, (0)_8\}$, case vi) $\beta_0= \{(-10)_2, (-8)_2, (-6)_2, (-4)_2, (-2)_2, (2)_2, (0)_8 \}$.  We use only TBSCN-SG model for analysis because these three TBS models accommodating heavy tailed response have similar performance.

\begin{table}[htp!]
	\begin{center}
	\caption{Results of simulation studies for using different methods of analysis when the true model is \eqref{eq:transmodel} with $p=20$: \label{table:sgaussian}}
	\smallskip \smallskip
	\scalebox{0.85}{
		\begin{tabular}{ll|ll|ll|ll|ll}
			\hline
			\multicolumn{2}{l}{} & \multicolumn{2}{|l}{TBS-SG} & \multicolumn{2}{|l}{Penalized Quantile} & \multicolumn{2}{|l}{LASSO} & \multicolumn{2}{|l}{TBSCN-SG} \\
			 \hline
			&  Measurement & $\eta_0$=0$\cdot$5 & $\eta_0$=1$\cdot$8 & $\eta_0$=0$\cdot$5 & $\eta_0$=1$\cdot$8 & $\eta_0$=0$\cdot$5 & $\eta_0$=1$\cdot$8& $\eta_0$=0$\cdot$5 & $\eta_0$=1$\cdot$8 \\ \hline
			\hline
			Case i)& $L/L^*-1$&-0$\cdot$05 &2$\cdot$72  &0$\cdot$02  &0$\cdot$7  & 0$\cdot$07 & -82$\cdot$81 & -0$\cdot$04 & 2$\cdot$74 \\ \cline{2-10}
			&\# of non-zeros&13$\cdot$6&12$\cdot$02&16$\cdot$36&15$\cdot$58 & 14$\cdot$02 &12$\cdot$24 & 13$\cdot$74 & 12$\cdot$02\\ \cline{2-10}
			&M ($\%$)& 0 & 0 & 0$\cdot$5 & 0 & 0 & 0 &0 &0\\ \cline{2-10}
			&S($\%$)& 20 & 0$\cdot$25 & 55$\cdot$25 & 44$\cdot$75 & 25$\cdot$75 &3 & 21$\cdot$75 & 0$\cdot$25 \\ \cline{2-10}
			&\mbox{JD}($\%$)&100&100&94&100 & 96 & 100 &100 &100\\ \hline \hline
			Case ii)&$L/L^*-1$ & -0$\cdot$03 & 0$\cdot$03  & 0$\cdot$05  & -0$\cdot$24 & 0$\cdot$1  &-1282 & -0$\cdot$02 & 0$\cdot$09 \\ \cline{2-10}
			& \# of non-zeros & 13$\cdot$26 & 12 & 17$\cdot$12 & 15$\cdot$02&14$\cdot$96&12$\cdot$14 & 13$\cdot$24 & 12 \\ \cline{2-10}
			&M ($\%$)& 0$\cdot$67 & 0 & 0$\cdot$67 & 0&0&0&8$\cdot$33&0\\ \cline{2-10}
			&S($\%$)&16$\cdot$75 & 0 & 65 & 37$\cdot$75&37$\cdot$5&1$\cdot$75& 16$\cdot$75 & 0\\ \cline{2-10}
			&\mbox{JD}($\%$)&94&100&94&100 &96 &100&92&100\\ \hline \hline
			Case iii)&$L/L^*-1$&-0$\cdot$03 & -0$\cdot$06  & 0$\cdot$03 & -0$\cdot$15 & 0$\cdot$11 &-2736 & -0$\cdot$02& 0$\cdot$04\\ \cline{2-10}
			& \# of non-zeros& 12$\cdot$94 & 12 & 16$\cdot$94 & 14$\cdot$52 &14$\cdot$7 &12$\cdot$24& 12$\cdot$92& 12\\ \cline{2-10}
			&M ($\%$)&0$\cdot$33 & 0 & 0$\cdot$33 & 0 & 0$\cdot$83 &0 &0$\cdot$5&0\\ \cline{2-10}
			&S($\%$)&12$\cdot$25 & 0 & 62$\cdot$25 & 31$\cdot$5 & 35 &3& 12$\cdot$25& 0\\ \cline{2-10}
			&\mbox{JD}($\%$)&96&100&96&100 &90 &100&94&100\\ \hline \hline
			Case iv)&$L/L^*-1$&-0$\cdot$01  & 0$\cdot$09  & 0$\cdot$05  & -0$\cdot$22   & 0$\cdot$09 & -689$\cdot$6& -0$\cdot$01& 0$\cdot$13\\ \cline{2-10}
			& \# of non-zeros&12$\cdot$58 & 12 & 16$\cdot$42 & 15$\cdot$32 & 14$\cdot$2& 12$\cdot$08& 12$\cdot$18& 12\\ \cline{2-10}
			&M ($\%$)&6$\cdot$17& 0 & 3$\cdot$67 & 0 &5$\cdot$67 & 0$\cdot$1&7$\cdot$67&0\\ \cline{2-10}
			&S($\%$)&16$\cdot$5 & 0 & 60$\cdot$75 & 41$\cdot$5 &36 &2& 13$\cdot$75& 0 \\ \cline{2-10}
			&\mbox{JD}($\%$)&46&100&64&100 &54 &92&34&100\\ \hline \hline
			Case v)&$L/L^*-1$&0$\cdot$01  & 0$\cdot$04  & 0$\cdot$05  & -0$\cdot$21& 0$\cdot$1  &-1072& 0$\cdot$01& 0$\cdot$09\\ \cline{2-10}
			&\# of non-zeros&11$\cdot$32 & 12 & 16$\cdot$12 & 14$\cdot$96 & 13$\cdot$56 & 12$\cdot$04& 11$\cdot$16& 12\\ \cline{2-10}
			&M ($\%$)&11$\cdot$17 & 0 & 5 & 0 &7$\cdot$17 & 0$\cdot$83&12&0\\ \cline{2-10}
			&S($\%$)&8$\cdot$25 & 0 & 59 & 37 &30$\cdot$25 &  1$\cdot$75& 7$\cdot$5& 0\\ \cline{2-10}
			&\mbox{JD}($\%$)&24&100 &62&100 &34 &90& 14& 100\\ \hline \hline
			Case vi)&$L/L^*-1$&-0$\cdot$01  & 0$\cdot$10  & 0$\cdot$06  & -0$\cdot$27 & 0$\cdot$12 & -763$\cdot$3& 0$\cdot$01& 0$\cdot$13 \\ \cline{2-10}
			&\# of non-zeros&12$\cdot$1 & 12 & 15$\cdot$62 & 14$\cdot$96 & 13$\cdot$52 & 12$\cdot$14& 11$\cdot$76& 12\\ \cline{2-10}
			&M ($\%$)&6$\cdot$67 & 0 & 4$\cdot$83 & 0 & 8$\cdot$33 & 1$\cdot$33&8$\cdot$33&0\\ \cline{2-10}
			&S($\%$)&11$\cdot$25 & 0 & 52$\cdot$5 & 37& 31$\cdot$5 & 3$\cdot$75& 9$\cdot$5& 0\\ \cline{2-10}
			&\mbox{JD}($\%$)&44&100 &62&100 &36 &84&36&100\\ \hline
		\end{tabular}}
		
$M$: masking proportion; $S$: swamping proportion $S$; $\mbox{JD}$: joint detection rate.
\end{center}
\end{table}

From the results in Table \ref{table:sgaussian}, we can clearly see that for all the cases, all the four methods perform better when $\eta_0=1.8$ compared to when $\eta_0=0\mbox{$\cdot$}5$, with respect to average number of non-zeros, masking, swamping and joint detection rate. This can be explained by the fact that when $\eta_0=1\mbox{$\cdot$}8$, we expect the posterior draws of $\eta$ to be close to $1\mbox{$\cdot$}8$ which corresponds to a non-local prior for $\beta$ (see Figure \ref{fig:density_eta}). When the range of signals is large and when there are many groups of small coefficients (see case (iv) and case (vi)), all of the methods do not perform well. Considering only variable selection results (average number of non-zeros), our TBS model clearly out performs penalized quantile method and LASSO.\\

\noindent {\bf Studies using simulation model with outliers and heavy-tailed distribution:} Our simulation models are similar to previous simulation model of \eqref{eq:transmodel} except that a few of the observations are now have large errors even after transformation. Although the Bayesian TBS methods with NI error in \eqref{eq:errorNI} do not provide the identification and estimation of these observations, we wonder whether they ensure robust variable selection and estimation of $\beta$, particularly in comparison to the Bayesian method using random location-shift model of \eqref{eq:transmodeloutlier}.

For the sake of brevity of the presentation, we omit the tables for results of simulation studies using data simulated from models \eqref{eq:transmodeloutlier} and \eqref{eq:errorNI}, and only summarize the results here. When we use the simulation model \eqref{eq:transmodeloutlier} with $\eta_0=0\mbox{$\cdot$}5, p=8, \gamma_{(1:2)}=8, \gamma_3=-8, \gamma_{(4:50)}=0$, and $\beta_0=(3,1\mbox{$\cdot$}5,0,0,2,0,0,0)$, our Bayesian method with model \eqref{eq:transmodeloutlier} obtains $3\mbox{$\cdot$}32$ non-zero $\gamma_i$'s on average. The masking (M), swamping (S) and joint detection (JD) rates are  $1\mbox{$\cdot$}33\%$, $0\mbox{$\cdot$}77\%$ and $98\%$. Also, our method provides $3\mbox{$\cdot$}22$ non-zero estimated $\beta_j$ on average with the masking, swamping and joint detection rates of  $0\mbox{$\cdot$}67\%$, $4\mbox{$\cdot$}8\%$ and $98\%$ respectively.

For Simulation 4, we choose $\eta_0=1\mbox{$\cdot$}8, p=8, \gamma_{(1:2)}=8, \gamma_3=-8, \gamma_{(4:50)}=0$ and $\beta_0=(3,1\mbox{$\cdot$}5,0,0,2,0,0,0)$. For our Bayesian method with model \eqref{eq:transmodeloutlier}, we have  $3\mbox{$\cdot$}28$ non-zero $\gamma_i$'s  on average with the masking,  swamping and joint detection rates of $0\%$, $0\mbox{$\cdot$}6\%$ and $100\%$. Also, our method provides $3\mbox{$\cdot$}22$ non-zero estimated $\beta_j$ on average with the masking, swamping and joint detection rates of  $0\%$, $0\%$ and $100\%$ respectively.

When we simulate data from model \eqref{eq:errorNI}, the Bayesian method leads to results similar to the results obtained for using simulation model \eqref{eq:transmodeloutlier}. However, only the Bayesian method using \eqref{eq:transmodeloutlier} provides the identification of the observations with errors of large magnitude. In practice, identification of such observations will facilitate further investigations regarding their measurement accuracy, influence on inference and other exploratory diagnostics. All of our Bayesian models provide better results than the penalized quantile regression and LASSO with respect to average number of non-zeros, masking, swamping and joint detection.



\section{Analysis of medical expenditure study}
Our motivating study is the Medical Expenditure Panel Survey \citep{cohen2003design, natarajan2008variance}, called the MEPS study in short, where the response variable is each patient's `total health care expenditures in the year 2002'. Previous analyses of of this study \citep{natarajan2008variance} suggest that the variance of the response is a function of the mean (heteroscedasticity). Often in practice, medical cost data are typically highly skewed to the right, because a small percentage of patients may accumulate extremely high costs compared to other patients, and the variance of total cost tends to increase as the mean increases.

In this article, we focus only on one large cluster because every cluster of MEPS study has different sampling design. After removing only a few patients with missing observations, we have $173$ patients and $24$ potential predictors including age, gender, race, disease history, etc. The minimum cost is 0 and the maximum is \$79660, with a mean \$4584 and median \$1342. For the convenience of computation, we standardize the response (cost) and five potential predictors of the patient: age in 2002, highest education degree attained, perceived health status, body mass index (BMI), and ability to overcome illness (OVERCOME). Rest of the potential predictors are binary variables with values 0 and 1. We analyze this study using our proposed Bayesian models and compare the results with the penalized quantile regression method of \citet{koenker2005quantile}.  For Bayesian methods, we use our transform-both-sides model \eqref{eq:transmodel}, the model of \eqref{eq:transmodeloutlier} with sparse large errors (TBSO-SG in short) and the model of \eqref{eq:errorNI} with contaminated normal marginal error. For each method, we compute an observed residual $y_{i0}-x_i^{\mathsmaller T}\hat{\beta}$, where $y_{i0}$ is the observed un-transformed response and $x_i^{\mathsmaller T}\hat{\beta}$ is the estimated median. The Q-Q plots for the residuals are in Figure \ref{fig:loss1}. 

\begin{figure}[htp!]
\centering
\includegraphics[scale=0.23]{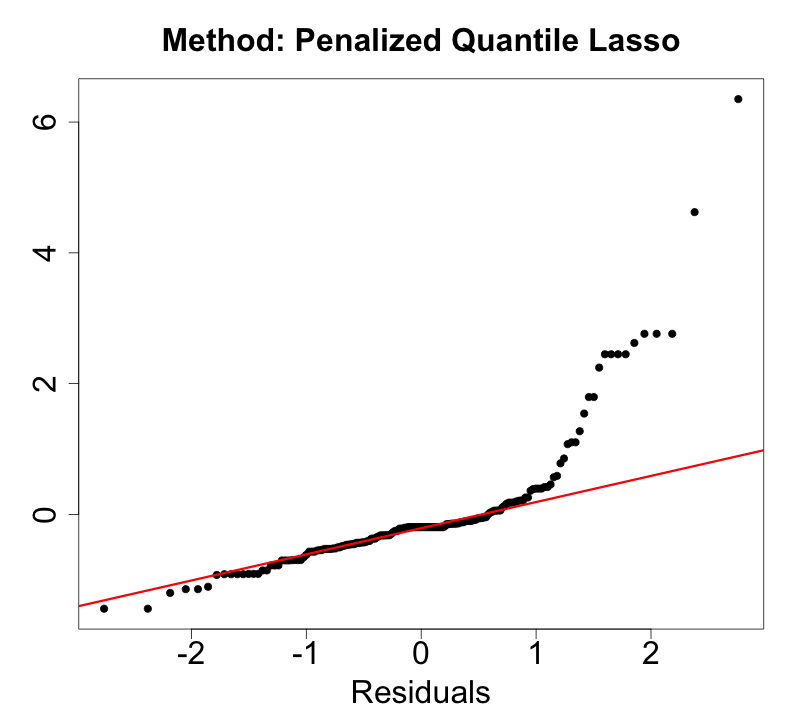}
\includegraphics[scale=0.23]{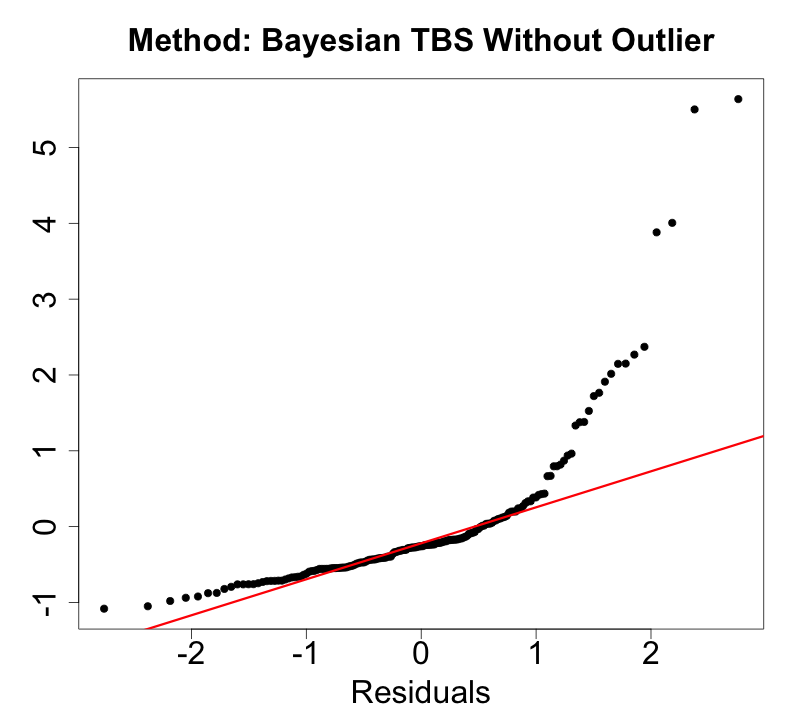}
\includegraphics[scale=0.23]{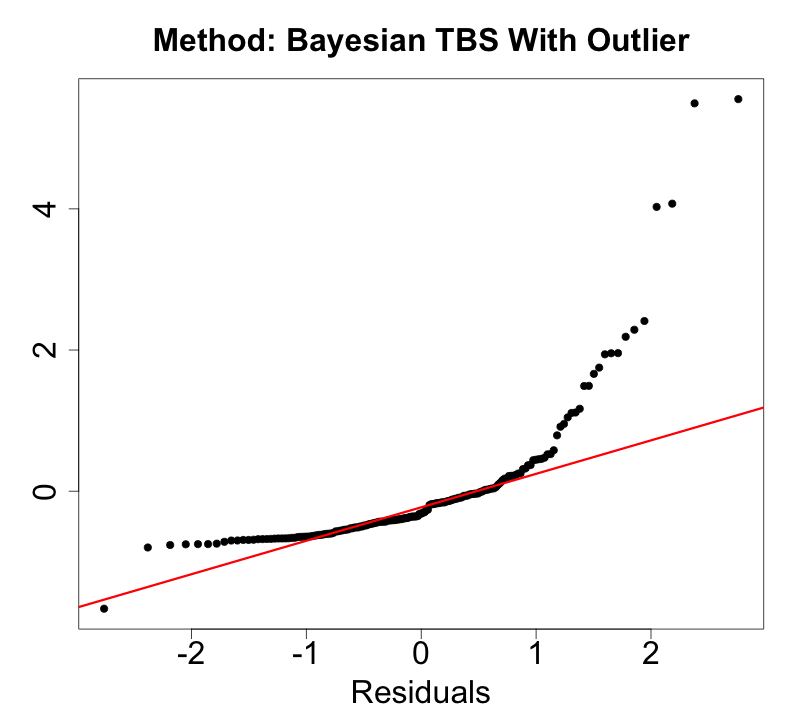}
\includegraphics[scale=0.23]{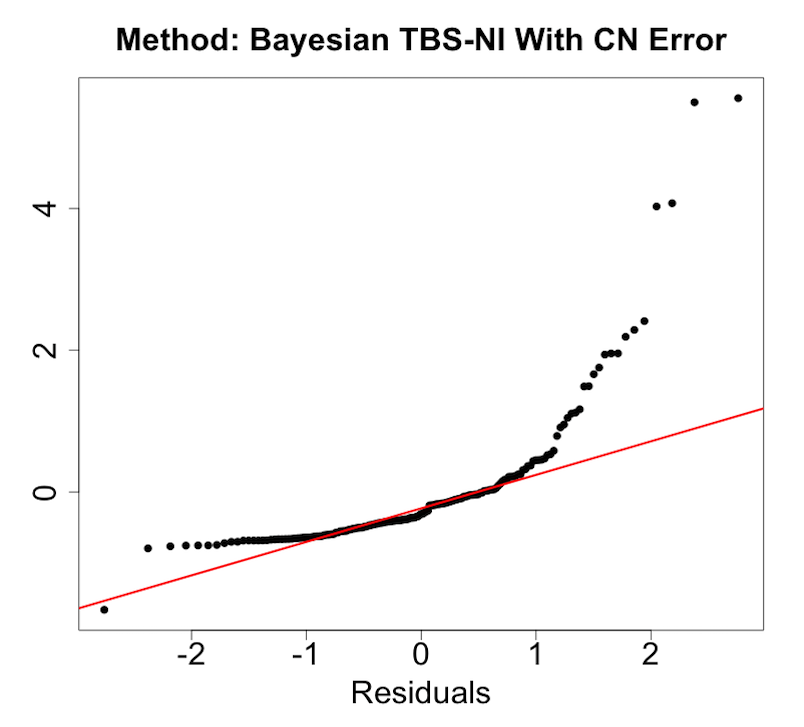}
\caption{Q-Q plots of observed residuals obtained from 4 methods}
\label{fig:loss1}
\end{figure}

From Q-Q plots in Figure \ref{fig:loss1}, it is obvious that the normality assumption 
about un-transformed response is untenable. We now compare the goodness-of-fit of the three Bayesian methods to evaluate their abilities from handling skewness and heteroscedasticity. For this purpose, we use the residual $g_{\hat{\eta}}(y_{i0})-g_{\hat{\eta}}(x_i^{\mathsmaller T}\hat{\beta})$ for the Bayesian TBS-SG model of \eqref{eq:transmodel} and the TBSCN-SG model of \eqref{eq:errorNI}, and the residual $g_{\hat{\eta}}(y_{i0})-g_{\hat{\eta}}(x_i^{\mathsmaller T}\hat{\beta})-\hat{\gamma_i}$ for Bayesian TBSO-SG model of \eqref{eq:transmodeloutlier}, and then display their Q-Q plots in Figure \ref{fig:loss2}.  It is evident from the Q-Q plots that TBSO-SG model of \eqref{eq:transmodeloutlier}  has the best justification to use it for Bayesian analysis, and TBSCN-SG model of \eqref{eq:errorNI} also performs well except may be for some observations in both tails.

\begin{figure}[htp!]
\centering
\includegraphics[scale=0.25]{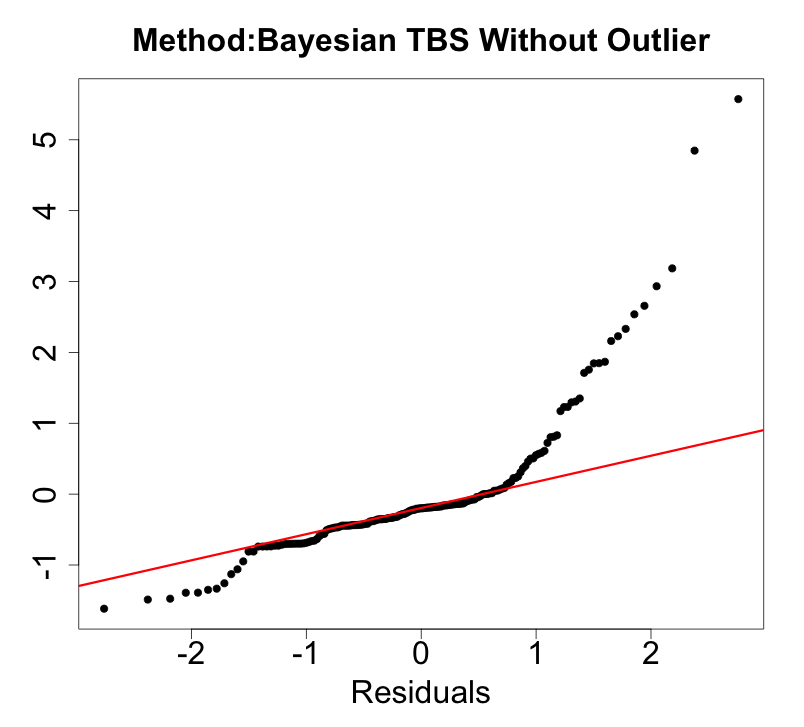}
\includegraphics[scale=0.25]{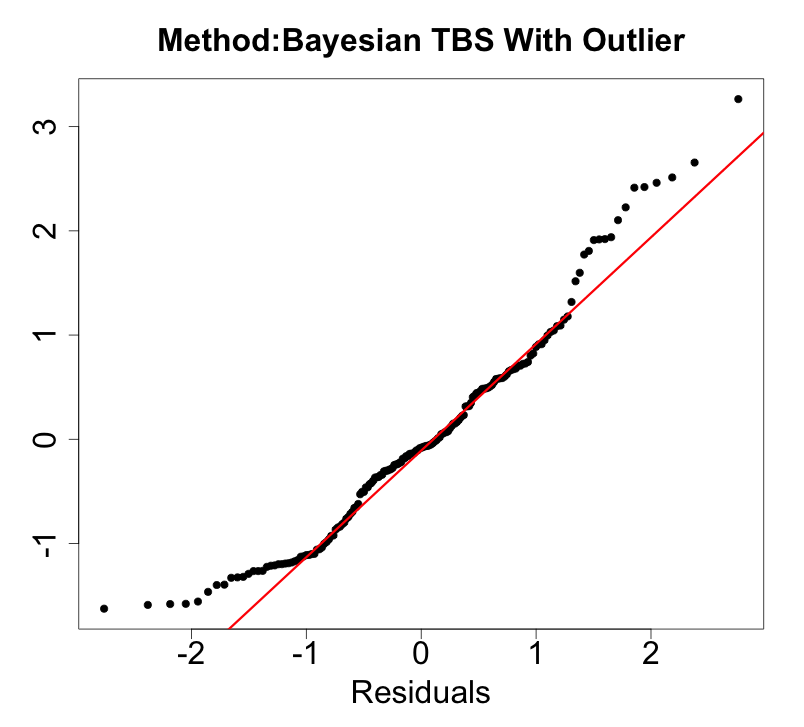}
\includegraphics[scale=0.25]{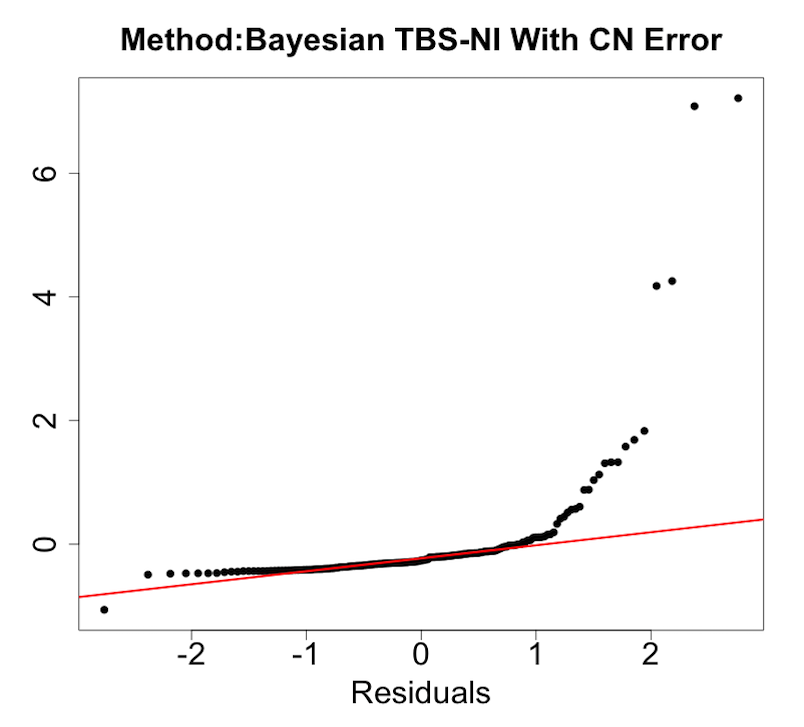}
\caption{Q-Q plots for residuals of transformed responses obtained from 3 Bayesian models}
\label{fig:loss2}
\end{figure}

Using our Bayesian model of \eqref{eq:transmodeloutlier}, we find large posterior evidence of effects of OVERCOME variable  with posterior \text{mean}=$-0\mbox{$\cdot$}17$ and $95\%$ credible interval ($-0\mbox{$\cdot$}21, -0\mbox{$\cdot$}12$), stroke with posterior \text{mean}=$0\mbox{$\cdot$}92$ and $95\%$ credible interval ($0\mbox{$\cdot$}66, 1\mbox{$\cdot$}23$), and medication with posterior \text{mean}=$-0\mbox{$\cdot$}35$ and $95\%$ credible interval ($-0\mbox{$\cdot$}41, -0\mbox{$\cdot$}24$). Model\eqref{eq:errorNI} identifies these same predictors of model \eqref{eq:transmodeloutlier} with slightly different interval estimates. Model \eqref{eq:transmodel} also identifies three predictors with large posterior evidence 
of effects: perceived health status, stroke and the indicator of major ethnic group. Stroke is the only variable with large posterior evidence 
of effects in all three models. Even though penalized quantile regression based analysis selects a larger number of predictors compared to the number of predictors selected by our models, the only statistically significant variable from quantile regression analysis is age (estimate of $0\mbox{$\cdot$}08$ with standard error $0\mbox{$\cdot$}03$). This may be explained by the larger estimated standard errors of the estimates from quantile regression compared to the posterior standard deviations of the corresponding parameters obtained via Bayesian analysis.

In order to better understand the prediction performance on observed data, we present a scatter plot with overlaid quantile lines for each Bayesian method in Figure \ref{fig:scatter}. For each method, we display scaled $x_i^{\mathsmaller T}\hat{\beta}$ and scaled observed response $y_{i0}$, along with estimated 25th percentile and 75th percentile curves using $g_{\hat{\eta}}^{-1}\{g_{\hat{\eta}}(x_i^{\mathsmaller T}\beta)+Z_{\alpha}^*\}$, where $Z_{\alpha}^*$ is estimated $\alpha$-percentile of $f_e(\cdot)$. Figure \ref{fig:scatter} shows that the method using \eqref{eq:transmodeloutlier} explains the observed data better than methods using \eqref{eq:transmodel} and \eqref{eq:errorNI}. The observations with large errors identified by analysis using \eqref{eq:transmodeloutlier} are marked by asterisk signs in the second plot. We find that all the observations identified by \eqref{eq:transmodeloutlier} are outside the estimated interquartile ranges. It shows that our transform-both-sides model of \eqref{eq:transmodeloutlier} is successful in handling data with skewness, heteroscadesticity as well as very large errors in few subjects. Model of \eqref{eq:errorNI} is also able to handle skewness and heteroscadesticity but is not able to identify observations with extremely large errors.

\begin{figure}[htp!]
\centering
\includegraphics[scale=0.25]{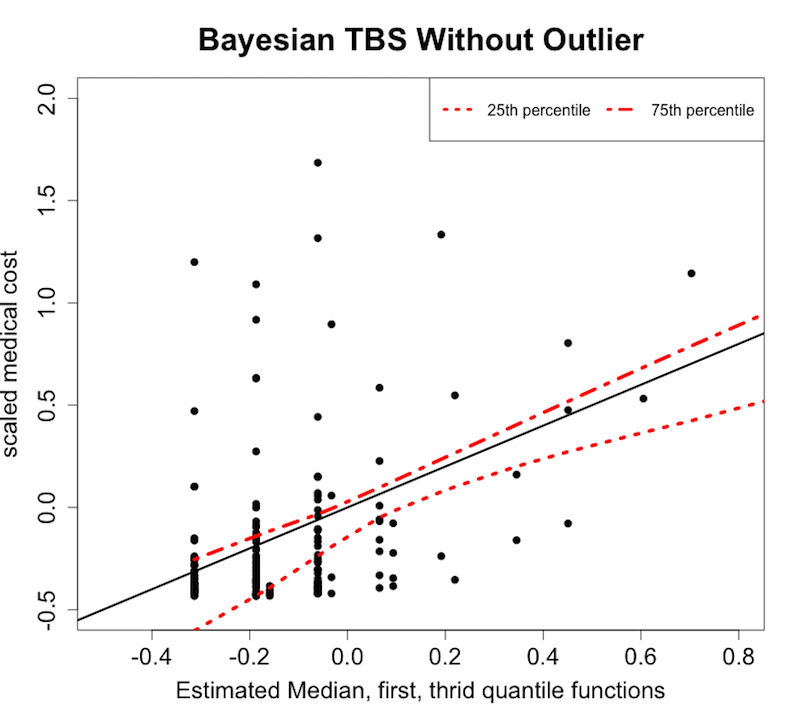}
\includegraphics[scale=0.25]{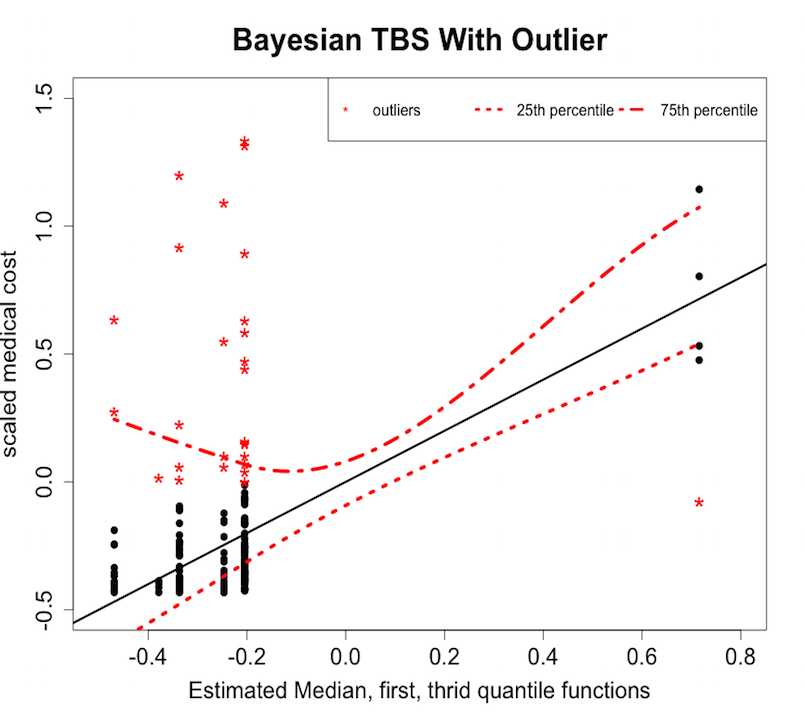}
\includegraphics[scale=0.25]{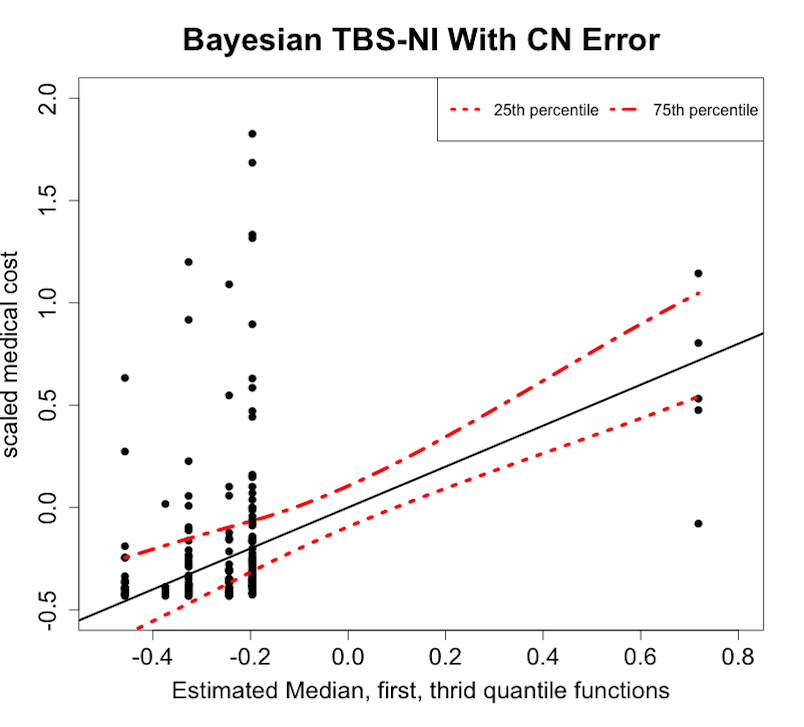}
\caption{Scatter plots of scaled observed responses and quantile regression functions obtained from 3 Bayesian models}
\label{fig:scatter}
\end{figure}

We also use posterior predictive loss approach \citep{GELFAND01031998} to evaluate the prediction accuracy under each Bayesian method. We compute the prediction errors of our Bayesian methods by $\sum_{i=1}^n\mbox{E}[\{g_{\hat{\eta}}(y_{i0})-g_{\hat{\eta}}(x_i^{\mathsmaller T}\hat{\beta})\}^2|D]$ from model \eqref{eq:transmodel}, and by $\sum_{i=1}^nE[\{g_{\hat{\eta}}(y_{i0})-g_{\hat{\eta}}(x_i^{\mathsmaller T}\hat{\beta})-\gamma_i\}^2|D]$ for model \eqref{eq:transmodeloutlier} using MCMC approximation, where $D$ is the observed dataset. The average prediction error from model \eqref{eq:transmodeloutlier} is $15\cdot03$, which is considerably better than model \eqref{eq:transmodel} with average prediction error $174\cdot45$.

\section{Discussion}
In this article, we propose Bayesian  variable selection methods for skewed and heteroscedastic response. The methods are highly suitable for modeling, computation, analysis and interpretation of real-life health care cost studies, where we aim to determine and estimate effects of a sparse set of explanatory variables for health care expenditures out of a large set of potential explanatory variables. Simulation results indicate a better performance of our Bayesian methods compared to existing frequentist quantile regression tools. Also, our Bayesian approaches provide flexible and robust estimations to incorporate a wide variety of practical situations. The advantages of our Bayesian methods include their practical and easy implementation using standard statistical software. In the appendix, we prove the consistency of variable selection even when the number of potential predictors $p$ is comparable to, however, smaller than  $n$. The proofs are only provided for a special case of the covariate matrix and when the transform parameter $\eta$ is known. Proof for a more general case can be obtained following a similar, but more tedious mathematical arguments.

\bibliographystyle{agsm}
\bibliography{paper-ref}

\newpage

\appendix
\section*{Proof of Theorem \ref{thm:1} }\label{ssec:A1}
For $\gamma \in \{0, 1\}^p$ and $x \in \mathbb{R}^p$, define $x_\gamma \in \mathbb{R}^{p_{\gamma}}$ to be the vector 
$(x_j: \gamma_j \neq 0)$.
The log-likelihood corresponding to \eqref{eq:transmodel}  has the expression
\begin{equation}\label{eq:logl}
\ell(\beta)
= \sum_{i=1}^n\log\big(g_{\eta}'(y_i)\big)-\frac{n}{2}\log(2\pi)-n\log(\sigma)-\frac{1}{2\sigma^2}\norm{g_\eta(y)-g_\eta(X\beta)}^2,
\end{equation}
where $g_\eta(y) =\big(g_\eta(y_1), \cdots, g_\eta(y_n) \big)^\T$, $g_\eta(X\beta)=\big\{g_\eta(x_1^\T\beta), \cdots, g_\eta(x_n^\T\beta)\big\}^\T$.  The gradient of \eqref{eq:logl} is given by
\begin{equation}\label{eq:grad}
\nabla\ell(\beta)=\frac{1}{\sigma^2}X^\T\bar{D}_\eta\big(g_\eta(y)-g_\eta(X\beta)\big),
\end{equation}
where $\bar{d}_\eta=(g_\eta'(x_1^T\beta), \cdots, g_\eta'(x_n^T\beta))$ with $g_\eta'(x_i^T\beta)=\abs{x_i^T\beta}^{\eta-1}$ and $\bar{D}_\eta=\text{diag}(\bar{d}_\eta)$.  $\beta \mapsto \nabla\ell(\beta)$ is continuous. Since the number of true signals $\beta_j^*$ is finite, we  assume $\beta_j^* \in [M_L, M_U]$. Hence, as long as $\beta$ is in a neighborhood of $\beta^*$,   $\nabla\ell(\beta)$ is bounded 
since $\abs{x_i^T\beta}^{\eta-1}$ in \eqref{eq:grad} is bounded when $\eta \in (1, 2)$.  The Hessian of \eqref{eq:logl} is defined as

\begin{equation}\label{eq:Hessian}
H(\beta)=\frac{1}{\sigma^2}X^T\big(-\bar{D}_\eta^T\bar{D}_\eta+\bar{\bar{D}}_\eta^T M_\beta(y, X) \big)X,
\end{equation}
where $\bar{\bar{d}}_\eta$ is defined as the element-wise second derivative of $g_\eta(x_i^T\beta)$ on $\beta$ 
with $g_\eta''(x_i^T\beta)=\text{\mbox{sgn}}(x_i^T\beta)(\eta-1)\abs{x_i^T\beta}^{\eta-2}$ and $\bar{\bar{D}}_\eta=\text{diag}(\bar{\bar{d}}_\eta)$. Meanwhile, $m_\eta(y, X)= \big(g_\eta(y_1)-g_\eta(x_1^T\beta), \cdots, g_\eta(y_n)-g_\eta(x_n^T\beta)\big)$ and $M_\eta(y, X) = \text{diag}(m_\eta(y, X))$.   Using Laplace approximation, the Bayes factor can be approximated as 
\begin{equation}\label{eq:bayesfac}
\frac{p(y|\beta, \gamma)}{p(y|\beta^*, \gamma^*)} =(2\pi)^{\frac{p_\gamma-p_{\gamma^*}}{2}} \times e^{\{\ell(\tilde{\beta}_{\gamma})-\ell(\beta^*_{\gamma^*})\}} \times \frac{\pi(\tilde{\beta}_{\gamma}|\gamma)}{\pi(\beta^*_{\gamma^*}|\gamma^*)} \times \frac{|H(\beta^*_{\gamma^*}|\gamma^*)|^{1/2}}{|H(\tilde{\beta}_\gamma|\gamma)|^{1/2}}. 
\end{equation}
The first term in the r.h.s of \eqref{eq:bayesfac} is $\big(O_p(1)\big)^n$ with $p \leq n$. For the second term, we denote by $\lambda(y)=\ell(\tilde{\beta})-\ell(\beta^*)$ the likelihood ratio statistic. As in \citet[Proposition~3]{rossell2015non}, it is straightforward to verify that our sampling model \eqref{eq:transmodel} satisfies Walker's conditions (A1)-(A5) and (B1)-(B4) \citep{Walker1969}. Hence, our MLE $\tilde{\beta_i}$ is consistent and the Hessian matrix $H(\tilde{\beta}_\gamma|\gamma)$ in \eqref{eq:bayesfac} converges in probability. We consider two cases below.

When $\gamma^* \not\subset \gamma$, i.e., $\gamma$ misses some true active coefficients, the second term $e^{(\text{log}L(\tilde{\beta}_{\gamma})-\text{log}L(\beta^*_{\gamma^*}))} \xrightarrow{P} e^{-n\text{KL}(p(y|\beta^*, \gamma^*), p(y|\beta, \gamma))}$
  where $\text{KL}(p(y|\beta^*, \gamma^*), p(y|\beta, \gamma))$ is the Kullback-Leibler divergence between optimal $p(y|\beta^*, \gamma^*)$ and $p(y|\beta, \gamma)$ under $\gamma$. Here the minimum KL divergence $\text{KL}(p(y|\beta^*, \gamma^*), p(y|\beta, \gamma))$ is strictly positive since, 
\begin{eqnarray}\label{eq:KLpostive} \nonumber
\text{KL}(p(y|\beta^*, \gamma^*), p(y|\beta, \gamma)) &=& \sum_{i=1}^n \int [\text{log}(p(y_i|\beta^*, \gamma^*))-\text{log}(p(y_i|\beta, \gamma))]p(y_i|\beta^*, \gamma^*)d_{y_i}\\ \nonumber
&=& \frac{1}{2\sigma^2}\sum_{i=1}^n \Big( E \big( g_\eta(y_i)-g_\eta(x_i^T\beta) \big)^2-E\big( g_\eta(y_i)-g_\eta(x_i^T\beta^*)\big) ^2 \Big)\\
&=& \frac{1}{2\sigma^2} \sum_{i=1}^n \big( g_\eta(x_i^T\beta^*)-g_\eta(x_i^T\beta) \big)^2 > 0
\end{eqnarray}
when $\beta \neq \beta^*$ satisfies the eigenvalue conditions in \S \ref{sec:cons} that indicates no linear dependency among covariates $x_i$. Therefore the second term is $O_p(e^{-n})$ when $\gamma^* \not\subset \gamma$.


When $\gamma^* \subset \gamma$, we denote the likelihood-ratio statistic  by $\Lambda(y)=\text{log}L(\tilde{\beta}_{\gamma})-\text{log}L(\beta^*_{\gamma^*})$. Under appropriate regularity conditions \citep{hogg2013introduction}, our likelihood ratio statistic $\Lambda(y)$ is asymptotically chi-square distributed. The regularity conditions relevant to the argument are listed as (R0)-(R9) in \citet{hogg2013introduction} where (R0)-(R2) and (R6)-(R8) can be obtained trivially. Conditions (R3) and (R4) are related to Fisher information and are satisfied by Hessian matrix \eqref{eq:Hessian}. Conditions (R5) and (R9) essentially guarantee that the remainder of a second order Taylor expansion around $\beta$ is bounded in probability. To that end, note that
\begin{eqnarray}\nonumber
& & \left \vert \frac{\partial^3}{\partial \beta_j\partial \beta_k\partial \beta_l} \ell(\beta) \right\vert \\ \nonumber
&=&\left \vert \frac{\eta-1}{\sigma^2}\sum_{i=1}^nx_{ij}x_{ik}x_{il} \Big[ -\big(2+\mbox{sgn}(x_i^T\beta)\big)|x_i^T\beta|^{2\eta-3}+(\eta-2)\big(g_{\eta}(y_i)-g_{\eta}(x_i^T\beta)\big)|x_i^T\beta|^{\eta-3}\Big]  \right\vert \\
&\leq& \frac{\eta-1}{\sigma^2}\sum_{i=1}^nx_{ij}x_{ik}x_{il} \Big[ 3|x_i^T\beta|^{2\eta-3}+\mid g_{\eta}(y_i)-g_{\eta}(x_i^T\beta) \mid |x_i^T\beta|^{\eta-3}\Big] := M_{jkl}(y; X)
\end{eqnarray} 
where $E[M_{jkl}(y;x)] < \infty$ for all $j, k, l \in 1, \cdots, p$. Therefore our model \eqref{eq:transmodel} satisfies all regularity conditions implying $\Lambda(y) \sim \chi_{p_{\gamma}-p_{\gamma^*}}^2$ and hence $O_p(1)$
as required. When $\gamma$ has moderate size with $p_{\gamma} \leq n$, we will show the first term in \eqref{eq:bayesfac} is dominated by the second term later.

Next consider the second term $\pi(\tilde{\beta}|\gamma)/\pi(\beta^*|\gamma^*)$ under a non-local prior with $\eta \in (1, 2)$ is known.
\begin{eqnarray}\label{eq:priorfac}  \nonumber
\frac{\pi(\tilde{\beta}|\gamma)}{\pi(\beta^*|\gamma^*)}&=&\frac{\prod_{i=1}^{p_\gamma}\frac{1}{\sqrt{2\pi} \sigma_\beta}\text{exp}(-\tilde{\beta}_{\gamma _i}^2/2\sigma_\beta^2)|\tilde{\beta}_{\gamma_i }|^{\eta-1}}{\prod_{j=1}^{p_{\gamma^*}}\frac{1}{\sqrt{2\pi} \sigma_\beta}\text{exp}(-\beta^{*2}_{\gamma^*_j}/2\sigma_\beta^2)|\beta^*_{\gamma^*_j}|^{\eta-1}}\\
&=&(\frac{1}{\sqrt{2\pi}\sigma_\beta})^{p_\gamma-p_{\gamma^*}}\text{exp}\big(-(\sum \tilde{\beta}_{\gamma_i}^2-\sum \beta^{*2}_{\gamma^*_j})/2\sigma_\beta^2 \big)\frac{\prod|\tilde{\beta}_{\gamma_i}|^{\eta-1}}{\prod|\beta^*_{\gamma^*_j}|^{\eta-1}}
\end{eqnarray}

First if $\gamma^* \subset \gamma$, given that $\tilde{\beta}_{\gamma_i}=O_p(n^{-1/2})$ and $\eta-1>0$, the second term is ensentially $O_p(n^{-(\eta-1)(p_\gamma-p_{\gamma^*})/2})$. When $\gamma^* \not\subset \gamma$, denote $s^*=\gamma \cap \gamma^*$, the second term is $O_p(n^{-(\eta-1)(p_\gamma-p_{s^*})/2})$ and upper bounded by $O_p(n^{-(\eta-1)/2})$. 

To conclude the proof, we need to deal with the the third term in \eqref{eq:bayesfac}. The Hessian matrix $H(\tilde{\beta})$ is given by \eqref{eq:Hessian}
with each element $H_{ij}=X_i^TX_jO_p(1)$. Since $\tilde{\beta_\gamma}$ converge in probability to $\beta_{\gamma^*}$, we have $n^{-1}H(\tilde{\beta}|\gamma) \xrightarrow{p} H(\tilde{\beta}|\gamma)$. Therefore by continous mapping theorem, 
 the third term in \eqref{eq:bayesfac} is approximated as 


\begin{equation}
 \frac{\abs{H(\beta^*|\gamma^*)}^{1/2}}{\abs{H(\tilde{\beta}|\gamma)}^{1/2}} \asymp \frac{n^{p_{\gamma^*/2}}}{n^{p_\gamma/2}}\frac{\abs{n^{-1}H(\beta_0|\gamma^*)}}{\abs{n^{-1}H(\tilde{\beta}|\gamma)}}  = O_p(n^{(p_{\gamma^*}-p_\gamma)/2}). 
\end{equation}

 To conclude, the Bayes factor \eqref{eq:bayesfac} is $O_p(n^{-\eta(p_{\gamma}-p_{\gamma^*})/2})$ when $\gamma^* \subset \gamma$,  and  $O_p(n^{(p_{\gamma^*}-p_{\gamma})/2}e^{-n})$ when $\gamma^* \not \subset \gamma$. Then the posterior probability $p(\gamma=\gamma^* \mid y)$ can be lower-bounded as
 \begin{eqnarray} \nonumber
 p(\gamma=\gamma^* \mid y) &=& \frac{p(y \mid \gamma^*)\pi(\gamma^*)}{p(y \mid \gamma^*)\pi(\gamma^*)+\sum_{\gamma \neq \gamma^*}p(y \mid \gamma)\pi(\gamma)} \\ \nonumber
 &=& \Big[ 1+\sum_{\gamma^* \subset \gamma}\frac{p(y \mid \gamma)\pi(\gamma)}{p(y \mid \gamma^*)\pi(\gamma^*)} + \sum_{\gamma^* \not \subset \gamma}\frac{p(y \mid \gamma)\pi(\gamma)}{p(y \mid \gamma^*)\pi(\gamma^*)} \Big]^{-1}\\ \nonumber
 &\geq& \Big[ 1+ \sum_{\gamma^* \subset \gamma}O_p(n^{-\eta(p_{\gamma}-p_{\gamma^*})/2}) + \sum_{\gamma^* \not \subset \gamma}O_p(n^{(p_{\gamma^*}-p_{\gamma})/2}e^{-n}) \Big]^{-1}\\ \nonumber
 &\geq& \Big[ 1+ \sum_{\gamma^* \subset \gamma}O_p(n^{-(p_{\gamma}-p_{\gamma^*})/2}) + \sum_{\gamma^* \not \subset \gamma}O_p(e^{-n}) \Big]^{-1} \xrightarrow{a.s.} 1   
 \end{eqnarray}
 which concludes the proof.  

\end{document}